\theoremstyle{remark}
\newtheorem{thm}{Theorem}
\newtheorem{defn}[thm]{Definition}
\newtheorem{assu}[thm]{Assumption}
\newtheorem{lem}[thm]{Lemma}
\newtheorem{lrop}[thm]{Proposition}
\newtheorem{coro}[thm]{Corollary}
\theoremstyle{remark}
\newtheorem{rmk}[thm]{Remark}
\def\TM{\textcolor{black}}
\title{\LARGE \bf
Dissipativity verification with guarantees for polynomial systems\\ from noisy input-state data
}
\author{Tim Martin and Frank Allg{\"o}wer*% <-this % stops a space
	\thanks{*T. Martin and F. Allg{\"o}wer are with the Institute for Systems Theory and Automatic Control, University of Stuttgart. This work was funded by Deutsche Forschungsgemeinschaft (DFG, German Research Foundation) under Germany's Excellence Strategy - EXC 2075 - 390740016. For correspondence, mailto: tim.martin@ist.uni-stuttgart.de.}% <-this % stops a space
}
\begin{document}

\IEEEoverridecommandlockouts

\IEEEpubid{\begin{minipage}{\textwidth}\ \\[12pt] \copyright 2020 IEEE. This version has been accepted for publication in EEE
		Control Systems Letters, 2020. Personal use of this material is permitted. Permissionfrom EUCA must be obtained for all other uses, in any current or future media, including reprinting/republishing this material for advertising or promotionalpurposes, creating new collective works, for resale or redistribution to servers or lists, or reuse of any copyrighted component of this work in other works.\end{minipage}}

\maketitle
\pagestyle{empty}

%%%%%%%%%%%%%%%%%%%%%%%%%%%%%%%%%%%%%%%%%%%%%%%%%%%%%%%%%%%%%%%%%%%%%%%%%%%%%%%%
\begin{abstract}
	
In this paper, we investigate the verification of dissipativity properties for polynomial systems without an \TM{explicitly identified} model but directly from noise-corrupted measurements. Contrary to most data-driven approaches for nonlinear systems, we determine dissipativity properties over \TM{all finite} time horizon\TM{s} using \TM{noisy} input-state data. To this end, we propose two noise characterizations to deduce \TM{two} data-based set-membership representations of the ground-truth system. Each representation then serves as a framework to derive computationally tractable conditions to verify dissipativity properties with rigorous guarantees from noise-corrupted data using \TM{sum of squares (SOS)} optimization.

\end{abstract}

%%%%%%%%%%%%%%%%%%%%%%%%%%%%%%%%%%%%%%%%%%%%%%%%%%%%%%%%%%%%%%%%%%%%%%%%%%%%%%%%
\section{Introduction}\label{Intrduction}

\IEEEPARstart{T}{he} standard approach to obtain a controller for nonlinear systems requires to retrieve a sufficiently precise model and the application of nonlinear controller design techniques~\cite{Khalil}. However, the identification of nonlinear systems is in general time consuming and \TM{requires often expert knowledge}. Hence, the interest on data-driven controller design techniques, where the controller is deduced without \TM{identifying} a model but directly from measured data of the system, has risen recently. %An overview of such approaches can be found in \cite{DataSurvey}.
\\\indent
One well-elaborated theory for the controller design of nonlinear systems are dissipativity properties \cite{DissiWillems} \TM{which give rise to stabilizing control laws as the small gain theorem \cite{Khalil} (Theorem 5.6)}. % or the feedback theorem for passive systems \cite{Khalil} (Theorem 6.1)}. 
Since these system properties give insight to the system and facilitate a controller design without knowledge of the system, the verification of these properties from measured trajectories can be leveraged to a data-driven controller design with stability and performance guarantees.\\\indent 
For linear time-invariant (LTI) systems, \cite{OneShot} determines dissipativity properties over a \TM{data-depended} finite time horizon from a noise-free single input-output trajectory. By exploiting the set-membership representation of an unknown LTI system by noisy input-state samples from \cite{Groningen}, \cite{AnneDissi} provides guaranteed dissipativity properties over \TM{all finite} time horizon\TM{s}\TM{\ as defined in \cite{DissiWillems} and required for, e.g., the small gain theorem.} For nonlinear systems, \cite{MontenbruckLipschitz} is tailored to estimate certain dissipativity properties over a \TM{data-depended} finite time horizon, as the $\mathcal{L}_2$-gain or conic relations \cite{Zames}, from a large number of input-output trajectories based on the Lipschitz constant of the system operator. To reduce the amount of required data, \cite{IterativeNLM} proposes sequential experiments to improve iteratively the accuracy of a non-parametric data-based Lipschitz approximation of the system operator. Nevertheless, the amount of data might be still too large for a real application as also indicated by bounds on the sampling complexity from \cite{Sharf}. \\\indent
\TM{For that reason, we establish in this paper a data-based framework, which is based on a set-membership approach for polynomial systems and is more data-efficient than \cite{MontenbruckLipschitz} and \cite{IterativeNLM}, to determine dissipativity properties over \TM{all finite} time horizon\TM{s}.} Contrary to \cite{DePersisSOS}, we consider polynomial systems in discrete time and measurements in presence of noise. By characterizing this noise by two distinct descriptions, we propose two data-based set-membership representations of the ground-truth system which constitute two frameworks to deduce computationally tractable conditions for verifying dissipativity properties using sum of squares (SOS) optimization. The first noise description bounds the noise signal in each time step which is commonly assumed, e.g., in set-membership identification \cite{Milanese}. This characterization yields for the verification of dissipativity properties with polynomial supply rates a\TM{n} SOS optimization problem which can be solved by semi-definite programming using standard SOS techniques \cite{SOSTutorial}. Since the complexity of this SOS optimization problem increases for additional samples, the second ansatz characterizes the noise by a single cumulative property. %as the energy of the noise over the measured time horizon. 
This approach was first introduced in \cite{Groningen} and yields a feasibility condition of a linear matrix inequality (LMI) to verify $(Q,S,R)$-dissipativity.  %\\\indent 
%The paper is organized as follows. First, we introduce some notations for SOS optimization and specify the problem setup for dissipativity verification of polynomial systems. Subsequent, we propose in Section~\ref{SecGeneralSOS} and Section~\ref{SecNonCons} each one noise description to deduce a data-based set-membership representation of the ground-truth system which yields a computationally tractable condition for verifying dissipativity. In Section~\ref{SecCompare}, we compare both approaches and apply them on two numerical examples in Section~\ref{SecEx}.
\IEEEpubidadjcol

\section{Preliminaries}\label{PreSet}

In this section, we introduce the notion of SOS polynomials and matrices and formulate the problem of verifying dissipativity properties for unidentified polynomial systems from noise-corrupted input-state data.

\subsection{SOS optimization}\label{SecSOS}

For a vectorial index $\alpha=\begin{bmatrix}
\alpha_1 & \cdots & \alpha_n\end{bmatrix}^T\in\mathbb{N}_0^n$ and a vector $x=\begin{bmatrix}
x_1 & \cdots & x_n\end{bmatrix}^T\in\mathbb{R}^n$, we write $|\alpha|=\TM{\alpha}_1+\cdots+ \TM{\alpha}_n$, the monomial $x^\alpha=x_1^{\alpha_1}\cdots x_n^{\alpha_n}$, and $\mathbb{R}[x]$ for the set of all polynomials $p$ in $x$, i.e.,
\begin{equation*}
p(x)=\sum_{\alpha\in\mathbb{N}_0^n,|\alpha|\leq d} a_\alpha x^\alpha,
\end{equation*}
with real coefficients $a_\alpha\in\mathbb{R}$. $d\in\mathbb{N}_0$ corresponds to the degree of the polynomial if there is an $a_\alpha\neq0$ with $|\alpha|=d$. Furthermore, we denote $\mathbb{R}[x]^m$ as the set of all $m$-dimensional vectors with entries in $\mathbb{R}[x]$ and $\mathbb{R}[x]^{r\times s}$ as the set of all $r\times s$-matrices with entries in $\mathbb{R}[x]$. The degree of a polynomial matrix is the largest degree of its elements. 
\begin{defn}[SOS matrix]\label{DefSOSMatrix}
	A polynomial matrix $P\in\mathbb{R}[x]^{\TM{r}\times \TM{r}}$ with even degree is called a\TM{n} SOS matrix if there exists a matrix $Q\in\mathbb{R}[x]^{\TM{s}\times \TM{r}}$ such that \TM{$P=Q^TQ$}. Moreover, let the set of all $\TM{r}\times \TM{r}$-SOS matrices be denoted by $\text{SOS}[x]^{\TM{r}\times \TM{r}}$. For $\TM{r}=1$, $P$ is called SOS polynomial.	
\end{defn}

SOS matrices are \TM{computationally attractive} as we can verify whether a polynomial matrix is a\TM{n} SOS matrix by an LMI feasibility condition which deduces from the \TM{following square matricial representation \cite{SOSDecomp}}.   
\begin{lrop}\label{SOSMatrix}
	A polynomial matrix $P\in\mathbb{R}[x]^{\TM{r}\times \TM{r}}$ is a\TM{n} SOS matrix if and only if there exist a real matrix $X\succeq0$ \TM{and a vector $Z\in\mathbb{R}[x]^{\beta}$ containing monomials of $x$} such that
	\begin{equation*}
	\TM{P=\begin{bmatrix}Z\otimes I_{\TM{r}}	\end{bmatrix}^TX\begin{bmatrix}Z\otimes I_{\TM{r}}	\end{bmatrix},}
	\end{equation*}	
	where $I_{\TM{r}}$ denotes the $\TM{r}\times \TM{r}$-identity matrix and $\otimes$ corresponds to the Kronecker product. 
\end{lrop}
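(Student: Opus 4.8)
The plan is to prove the two implications of the equivalence separately, in both cases leaning on the elementary fact that a real symmetric matrix is positive semi-definite if and only if it factors as $X=L^TL$, combined with the block structure of the Kronecker product $Z\otimes I_r$. For the ``if'' direction I would start from a representation $P=(Z\otimes I_r)^T X (Z\otimes I_r)$ with $X\succeq 0$. Since $X$ is positive semi-definite, there is a real matrix $L$ with $X=L^TL$ (a symmetric square root or a Cholesky-type factor). Setting $Q:=L\,(Z\otimes I_r)$, which is a genuine polynomial matrix because $L$ is constant and the entries of $Z$ are monomials, yields $P=(Z\otimes I_r)^T L^T L (Z\otimes I_r)=Q^TQ$. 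By Definition~\ref{DefSOSMatrix} this already shows that $P$ is an SOS matrix, the even degree of $P$ being automatic from the form $Q^TQ$.

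For the ``only if'' direction I would begin with an SOS decomposition $P=Q^TQ$ with $Q\in\mathbb{R}[x]^{s\times r}$, as granted by Definition~\ref{DefSOSMatrix}. I would then collect in a column vector $Z\in\mathbb{R}[x]^{\beta}$ every monomial of $x$ up to degree $\tfrac12\deg(P)$, or at least every monomial that actually occurs in some entry of $Q$. The key observation is that, owing to the block structure
\[
Z\otimes I_r=\begin{bmatrix} Z_1 I_r \\ \vdots \\ Z_\beta I_r \end{bmatrix},
\]
any row of $Q$, being a vector of polynomials in exactly these monomials, can be written as a constant row vector multiplied by $Z\otimes I_r$. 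Stacking these rows produces a constant matrix $A\in\mathbb{R}^{s\times\beta r}$ with $Q=A\,(Z\otimes I_r)$. Substituting gives $P=(Z\otimes I_r)^T A^T A\,(Z\otimes I_r)$, and the claim follows with the choice $X:=A^TA\succeq 0$.

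The only genuinely technical point, and the step I expect to require the most care, is the bookkeeping in this second direction: one has to verify that the assignment sending a polynomial row vector to its coefficient vector against $Z\otimes I_r$ is well defined and linear, and that enlarging $Z$ to contain all monomials appearing in $Q$ is enough to represent every entry of $Q$ simultaneously. Once this monomial-to-coefficient identification is made precise, both directions collapse to the standard equivalence between positive semi-definiteness and the existence of a factorization $A^TA$, so I anticipate no further difficulty in completing the argument.
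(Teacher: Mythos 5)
Your proof is correct, and it is precisely the Gram matrix (square matricial representation) argument that the paper's own proof invokes purely by citation to \cite{Gmm} and \cite{SOSDecomp}: factoring $X=L^TL$ to get $Q=L(Z\otimes I_r)$ in one direction, and collecting the monomials of $Q$ into a constant coefficient matrix $A$ with $Q=A(Z\otimes I_r)$ and $X=A^TA$ in the other. In short, you have written out in full the standard details that the paper delegates to its references, so the approach is essentially the same.
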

\begin{proof}
	\TM{The statement follows from the Gram matrix method \cite{Gmm}. A detailed proof can be found in \cite{SOSDecomp}.}
\end{proof}

In our application of SOS optimization, we are confronted to verify that \TM{a polynomial $p\in\mathbb{R}[x]$ is non-negative} for all $x\in\{x\in\mathbb{R}^n:c_1(x)\geq0,\dots,c_k(x)\geq0\}$ \TM{with $c_i\in\mathbb{R}[x]$}. We can boil down this problem to an LMI feasibility condition using the following SOS relaxation \TM{from \cite{ProofProp}}.
\begin{lrop}[SOS relaxation]\label{SOSRelaxation}
	\TM{A polynomial $p\in\mathbb{R}[x]$ is non-negative for all $x\in\{x\in\mathbb{R}^n:c_1(x)\geq0,\dots,c_k(x)\geq0\}$ with $c_i\in\mathbb{R}[x]$ if there exist SOS polynomials $t_i\in\text{SOS}[x],i=1,\dots,k$ such that $p-\sum_{i=1}^{k}t_ic_i\in\text{SOS}[x]$.}
\end{lrop}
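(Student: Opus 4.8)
The plan is to prove this as a certificate-verification argument: the postulated SOS decomposition directly witnesses non-negativity of $p$ on the constraint set. Since only the sufficiency (``if'') direction is claimed, no deep real algebraic geometry is required; everything reduces to the single fact that an SOS polynomial is pointwise non-negative.

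First I would record that fact. By Definition \ref{DefSOSMatrix} with $r=1$, any $t\in\text{SOS}[x]$ admits a representation $t=Q^TQ$ with $Q\in\mathbb{R}[x]^{s}$, so that $t(x)=Q(x)^TQ(x)\geq 0$ for every $x\in\mathbb{R}^n$. Applying this to the hypothesis, each multiplier $t_i$ satisfies $t_i(x)\geq 0$ on all of $\mathbb{R}^n$, and the remainder $\sigma:=p-\sum_{i=1}^{k}t_ic_i$, being SOS by assumption, likewise satisfies $\sigma(x)\geq 0$ for every $x\in\mathbb{R}^n$.

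Next I would fix an arbitrary point $x$ in the set $\{x\in\mathbb{R}^n:c_1(x)\geq0,\dots,c_k(x)\geq0\}$ and evaluate the decomposition there. At such a point we have $c_i(x)\geq 0$ by membership and $t_i(x)\geq 0$ by the previous step, so each $t_i(x)c_i(x)$ is a product of two non-negative reals and hence non-negative. Adding $\sigma(x)\geq 0$ to the sum of these products yields
\begin{equation*}
p(x)=\sigma(x)+\sum_{i=1}^{k}t_i(x)c_i(x)\geq 0.
\end{equation*}
Since $x$ was an arbitrary element of the set, this establishes the claim.

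I do not anticipate any genuine obstacle: the argument is a pointwise sign computation whose only structural ingredient is the non-negativity of SOS polynomials inherited from Definition \ref{DefSOSMatrix}. The difficult content lies rather in the converse implication---that non-negativity on a semialgebraic set (under suitable compactness/Archimedean assumptions) forces the existence of such a representation, which is the substance of Putinar's Positivstellensatz and is deliberately outside the scope of this one-sided statement.
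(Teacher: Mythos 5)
Your proof is correct, and it takes a genuinely different route from the paper. The paper does not give a self-contained argument at all: it defers to Lemma 2.1 of \cite{ProofProp}, described as a proof ``based on the Positivstellensatz.'' Your argument is instead a direct pointwise certificate check, whose only ingredient is that an SOS polynomial is everywhere non-negative (Definition~\ref{DefSOSMatrix} with $r=1$ gives $t=Q^TQ$, so $t(x)=\sum_j Q_j(x)^2\geq0$): at any point $x$ of the constraint set each product $t_i(x)c_i(x)$ is non-negative and the remainder $\sigma(x)=p(x)-\sum_{i=1}^{k}t_i(x)c_i(x)$ is non-negative, hence $p(x)\geq0$. This buys self-containedness and makes transparent that the one-sided (``if'') implication requires no real algebraic geometry; the Positivstellensatz machinery is only the substance of the converse question --- when non-negativity on the set forces the existence of such multipliers --- which, as you correctly observe, the proposition does not claim. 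What the paper's citation buys is the link to that richer theory and to a reference where the statement is recorded, but as a proof of the implication actually asserted, your elementary computation is complete and arguably the more appropriate one.
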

\begin{proof}\TM{
		A proof based on the Positivstellensatz can be found in \cite{ProofProp} (Lemma 2.1).}
\end{proof}
%Since not every positive definite polynomial matrix $P(x)$ is a\TM{n} SOS matrix, the SOS relaxation is in general not a tight description of positive definite polynomial matrices. %However, the relaxation in Proposition~\ref{SOSRelaxation} is indeed asymptotically exact in the sense that the SOS relaxation is tight if the degree of the SOS matrices $T_i(x)$ tends to infinity \cite{SchererLMI}.

\subsection{Problem setup}\label{ProblemSetup}

We consider the nonlinear discrete-time system with polynomial dynamics 
\begin{align}\label{TrueSystem}
x(t+1)=f(x(t),u(t)),\ f\in\mathbb{R}[x,u]^{n}
\end{align}
and state-input constraints $(x,u)\in\TM{\mathbb{P}}$ with \TM{non-empty set}
\begin{equation}\label{Constraints}
\begin{aligned}
\TM{\mathbb{P}}=\{(x,u)\in\mathbb{R}^n\times\mathbb{R}^m: p_i(x,u)\leq 0,\  p_i\in\mathbb{R}[x,u],&\\ i=1,\dots,c\}&.
\end{aligned}
\end{equation}
The goal of this paper is the derivation of computationally tractable conditions to check whether system \eqref{TrueSystem} is dissipative on \eqref{Constraints} without identifying a model but directly from input-state data. Since dissipativity properties are originally defined for continuous-time unconstrained systems in \cite{DissiWillems}, we specify a suitable notion of dissipativity for discrete-time systems under constraints.
\begin{defn}[Dissipativity]\label{DissiDef}
	System~\eqref{TrueSystem} is dissipative on $\TM{\mathbb{P}}\subseteq\mathbb{R}^n\times\mathbb{R}^m$ with respect to the given supply rate $s:\TM{\mathbb{P}}\rightarrow\mathbb{R}$ if there exists a continuous storage function $\lambda:\mathbb{X}\rightarrow\mathbb{R}_{\geq0}$ such that
	\begin{align}\label{dissipativityInqu}
	\lambda(f(x,u))-\lambda(x)\leq s(x,u),\quad \forall (x,u)\in\TM{\mathbb{P}}, 
	\end{align}
	where $\mathbb{X}\subseteq\mathbb{R}^n$ denotes the projection of $\TM{\mathbb{P}}$ on the state-space $\mathbb{R}^n$. Moreover, the system is called $(Q,S,R)$-dissipative if it is dissipative with respect to the supply rate 
	\begin{equation*}
	s(x,u)=\begin{bmatrix}x\\u\end{bmatrix}^T\begin{bmatrix}Q & S\\ S^T & R\end{bmatrix}\begin{bmatrix}x\\u\end{bmatrix}.
	\end{equation*}
\end{defn} 

\vspace{0.2cm}While the verification of dissipativity inequality \eqref{dissipativityInqu} for a (known) polynomial system~\eqref{TrueSystem} and polynomial supply rate using SOS optimization is well-investigated \cite{SOSTutorial}, dissipativity verification of an unidentified polynomial system directly from noisy data as formulated next hasn't been analyzed yet.\\\indent%, to the authors' best knowledge. 
Suppose that an upper bound on the degree of $f$ is known while its coefficients are unidentified. Then the system dynamics~\eqref{TrueSystem} can be \TM{represented by}
\begin{align}\label{Systemdesc}
f(x,u)=Az(x,u)=(I_n\otimes z(x,u)^T)a,
\end{align}
where $z\in\mathbb{R}[x,u]^\ell$ contains \TM{at least all monomials of $f$ according to the known upper bound on the degree of $f$}. $A\in\mathbb{R}^{n\times \ell}$ and \TM{$a=\text{vec}(A^T)\in\mathbb{R}^{n\ell}$, where vec denotes the vectorization of a matrix by stacking its columns,} contain the unknown coefficients. Furthermore, we assume the access to \TM{noisy}
input-state data
\begin{equation}\label{DataSet}
\{(\tilde{x}_i^+,\tilde{x}_i,\tilde{u}_i)_{i=1,\dots,D}\}
\end{equation}
satisfying $\tilde{x}_i^+=f(\tilde{x}_i,\tilde{u}_i)+\tilde{d}_i$. \TM{Note that we measure the state $\tilde{x}_i^+$ from the underlying system, i.e., $\tilde{x}_i^+=x_i^++d_i$ with $x_i^+=f(x_i,u_i)$, the true state $x_i\neq\tilde{x}_i$, and input $u_i\neq\tilde{u}_i$ and where $d_i\in\mathbb{R}^n$ summarizes the uncertainty due to measurement noise. Therefore, $\tilde{d}_i=d_i+f(x_i,u_i)-f(\tilde{x}_i,\tilde{u}_i)$, i.e., the noise vector $\tilde{d}_i$ contains the effect of $d_i$ and \TM{analogously to \cite{Milanese}} the difference when applying the dynamics to the uncertain state $\tilde{x}_i$ and input $\tilde{u}_i$ instead of the true state $x_i$ and input $u_i$.} As clarified in \cite{AnneDissi}, we could also study noise $\tilde{d}_i$ that affects through a matrix $B$ to include addition knowledge on its influence. \\\indent
In the sequel, we characterize the noise $\tilde{d}_i,i=1,\dots,D$ more precisely to derive data-based set-membership representations of the unidentified polynomial system~\eqref{TrueSystem}.% which constitute frameworks to verify whether \eqref{TrueSystem} is dissipative.

\section{Data-driven dissipativity verification for separately bounded noise}\label{SecGeneralSOS}

In this section, we develop a framework for dissipativity verification of polynomial system~\eqref{TrueSystem} from noise-corrupted data \eqref{DataSet} if the noise is bounded explicitly in each time step as specified in the following assumption.
\begin{assu}[Separately bounded noise]\label{Noise1}
	For the measured data \eqref{DataSet}, suppose that for $i=1,\dots,D$
	\begin{equation}\label{BoundedNoise1}
	\tilde{d}_i\in\mathcal{D}_i^{\TM{\text{SB}}}=\{d\in\mathbb{R}^n:\delta_i(d)\leq0,\ \delta_i\in\mathbb{R}[d]\},
	\end{equation} 
	where $\mathcal{D}_i^{\TM{\text{SB}}}$ is bounded.
\end{assu}

The noise characterization in Assumption~\ref{Noise1} \TM{seems to be general} and incorporates, e.g., quadratically bounded noise
\begin{equation}\label{QuadBoundNoise}
\delta_i(d)=\begin{bmatrix}d\\1	\end{bmatrix}^T\begin{bmatrix}\Delta_1&\Delta_2\\ \Delta_2^T & \Delta_3,	\end{bmatrix}\begin{bmatrix}d\\1	\end{bmatrix}, \Delta_1\succ0, \Delta_3\leq0.
\end{equation}
Moreover, Assumption~\ref{Noise1} includes noise with bounded amplitude $\delta_i(d)=d^Td-\epsilon^2$ and noise that exhibits a fixed signal-to-noise-ratio $\delta_i(d)=d^Td-\tilde{\epsilon}^2\tilde{x}_i^T\tilde{x}_i$ which are frequently assumed in system identification \cite{Milanese}. \\\indent
To derive a data-based set-membership representation of the ground-truth system~\eqref{TrueSystem} which is the basis to verify dissipativity properties without \TM{identifying} an explicit model, we next define the set of all systems \TM{parametrized by $a$}
\begin{align}\label{Systemdescribtion}
x(t+1)=\underbrace{(I_n\otimes z(x,u)^T)}_{=:Z(x,u)}a,
\end{align}
with unidentified coefficient vector $a\in\mathbb{R}^{n\TM{\ell}}$ and known vector $z\in\mathbb{R}[x,u]^{\ell}$, which explain the data \eqref{DataSet}. 
\begin{defn}[Feasible system set]\label{DefFSS1}
	The set of all systems \eqref{Systemdescribtion} admissible with the measured data \eqref{DataSet} for separately bounded noise \eqref{BoundedNoise1} is given by the feasible system set $\text{FSS}_{\TM{\text{SB}}}=\{Za\in\mathbb{R}[x,u]^n:a\in\Sigma_{\TM{\text{SB}}}	\}$ with $\Sigma_{\TM{\text{SB}}}=\{a\in\mathbb{R}^{n\ell}:\forall i\in\{1,\dots,D\}\hspace{0.25cm} \exists \tilde{d}_i\TM{\in\mathcal{D}_i^{\TM{\text{SB}}}} \text{\ satisfying\ }\tilde{x}_i^+=Z(\tilde{x}_i,\tilde{u}_i)a+\tilde{d}_i\}$.
\end{defn}

Since the samples \eqref{DataSet} satisfy $\tilde{x}_i^+=f(\tilde{x}_i,\tilde{u}_i)+\tilde{d}_i$ with $\tilde{d}_i\in\mathcal{D}_i^{\TM{\text{SB}}}$ by assumption, 
the ground-truth system is an element of $\text{FSS}_{\TM{\text{SB}}}$, i.e., $f\in\text{FSS}_{\TM{\text{SB}}}$. Thereby, $\text{FSS}_{\TM{\text{SB}}}$ is a set-membership representation of the ground-truth system~\eqref{TrueSystem}. Analogously to \cite{Groningen}, we deduce in the following lemma a data-based description of $\text{FSS}_{\TM{\text{SB}}}$.
%\begin{lem}\label{LemFSS1}
%The set of all coefficients $\Sigma_a$ for which system \eqref{Systemdescribtion} explains the measured data set \eqref{DataSet} for separately bounded noise~\eqref{BoundedNoise1} is equivalent to
%\begin{align}\label{DataBasedParaSet}
%	\{a:\begin{bmatrix}a\\1\end{bmatrix}^TQ_i\begin{bmatrix}a\\1\end{bmatrix}\leq0,i=1,\dots,D\}
%\end{align}
%with the data-dependent matrices
%\begin{align*}
%	Q_i {=}\begin{bmatrix}Z(\tilde{x}_i,\tilde{u}_i)^TZ(\tilde{x}_i,\tilde{u}_i) & -Z(\tilde{x}_i,\tilde{u}_i)^T\tilde{x}_i^+\\-{\tilde{x}_i^{+^T}}Z(\tilde{x}_i,\tilde{u}_i) & {\tilde{x}_i^{+^T}}{\tilde{x}_i^+}-\delta_i^2 \end{bmatrix}.
%\end{align*}
%\end{lem}
\begin{lem}\label{LemFSS1}
	The set of all coefficients $\Sigma_{\TM{\text{SB}}}$ for which system \eqref{Systemdescribtion} explains the measured data set \eqref{DataSet} for separately bounded noise~\eqref{BoundedNoise1} is equivalent to
	\begin{align}\label{DataBasedParaSet}
	\{a\in\mathbb{R}^{n\ell}:\delta_i(\tilde{x}_i^+-Z(\tilde{x}_i,\tilde{u}_i)a)\leq0,i=1,\dots,D\}
	\end{align}
	with the data-dependent polynomials $\delta_i(\tilde{x}_i^+-Z(\tilde{x}_i,\tilde{u}_i)a)\in\mathbb{R}[a],\,i=1,\dots,D$.
\end{lem}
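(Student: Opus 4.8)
The plan is to prove the set equality by showing that, for each fixed $a\in\mathbb{R}^{n\ell}$, the defining condition of $\Sigma_{\text{SB}}$ is logically equivalent to membership in the set \eqref{DataBasedParaSet}. The essential observation is that the constraints decouple over the sample index $i$, and that within each index the noise variable $\tilde{d}_i$ enters \emph{only} through the equality $\tilde{x}_i^+=Z(\tilde{x}_i,\tilde{u}_i)a+\tilde{d}_i$. For fixed $a$ and fixed data, this equality determines $\tilde{d}_i$ uniquely, so the existential quantifier over $\tilde{d}_i$ can be eliminated by substitution, collapsing the membership test in $\mathcal{D}_i^{\text{SB}}$ into a constraint stated purely in $a$.

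Concretely, I would fix $a$ and treat each $i\in\{1,\dots,D\}$ in isolation. For each $i$, the equation $\tilde{x}_i^+=Z(\tilde{x}_i,\tilde{u}_i)a+\tilde{d}_i$ has the unique solution $\tilde{d}_i=\tilde{x}_i^+-Z(\tilde{x}_i,\tilde{u}_i)a$. Hence the assertion ``there exists $\tilde{d}_i\in\mathcal{D}_i^{\text{SB}}$ satisfying the equation'' is equivalent to the assertion that this single value lies in $\mathcal{D}_i^{\text{SB}}$, which by the definition \eqref{BoundedNoise1} of $\mathcal{D}_i^{\text{SB}}$ is in turn equivalent to $\delta_i(\tilde{x}_i^+-Z(\tilde{x}_i,\tilde{u}_i)a)\leq0$. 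Conjoining these equivalent conditions over all $i$ reproduces exactly the set \eqref{DataBasedParaSet}, which establishes the claimed equality. It then remains to justify the parenthetical claim that $\delta_i(\tilde{x}_i^+-Z(\tilde{x}_i,\tilde{u}_i)a)\in\mathbb{R}[a]$: since $\tilde{x}_i^+$ and $Z(\tilde{x}_i,\tilde{u}_i)$ are fixed numerical quantities obtained by evaluating at the data \eqref{DataSet}, the argument $\tilde{x}_i^+-Z(\tilde{x}_i,\tilde{u}_i)a$ is affine in $a$, and composing the polynomial $\delta_i\in\mathbb{R}[d]$ with this affine map yields a polynomial in $a$.

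The argument is elementary, and I do not anticipate a genuine obstacle. The one point requiring care is recognizing that it is precisely the \emph{uniqueness} of $\tilde{d}_i$ that lets the existential quantifier be discharged by substitution, so that no feasibility or case analysis over the noise set is needed; the decoupling over $i$ then makes the collection step immediate. This mirrors the corresponding linear-systems result in \cite{Groningen}, and the substitution carries over because $Z(\tilde{x}_i,\tilde{u}_i)a$ remains linear in the decision variable $a$ even though the underlying dynamics are polynomial in the state and input.
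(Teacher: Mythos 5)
Your proposal is correct and follows essentially the same route as the paper: the paper's proof performs the same substitution $\tilde{d}_i=\tilde{x}_i^+-Z(\tilde{x}_i,\tilde{u}_i)a$, merely split into two explicit inclusions (forward: combine the existence of $\tilde{d}_i$ with $\delta_i(\tilde{d}_i)\leq0$; converse: construct $\tilde{d}_i$ by that formula), where you instead discharge the existential quantifier in one step via the uniqueness of $\tilde{d}_i$. Your additional justification that $\delta_i(\tilde{x}_i^+-Z(\tilde{x}_i,\tilde{u}_i)a)$ is a polynomial in $a$ (composition of a polynomial with an affine map) is a detail the paper leaves implicit, but it does not change the argument.
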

\begin{proof}
	If $a\in\Sigma_{\TM{\text{SB}}}$ then there exist realizations of the noise $\tilde{d}_i,i=1,\dots,D$ such that $\tilde{x}_i^+=Z(\tilde{x}_i,\tilde{u}_i)a+\tilde{d}_i$ and $\delta_i(\tilde{d}_i)\leq0$. Combining both yields \eqref{DataBasedParaSet}.\\\indent
	To prove the converse, suppose that $a$ is an element of \eqref{DataBasedParaSet}. Then construct $\tilde{d}_i,i=1,\dots,D$ such that $\tilde{x}_i^+=Z(\tilde{x}_i,\tilde{u}_i)a+\tilde{d}_i$. Since $a$ satisfies \eqref{DataBasedParaSet}, $\tilde{d}_i,i=1,\dots,D$ satisfy \eqref{BoundedNoise1}, and hence $a\in\Sigma_{\TM{\text{SB}}}$.
\end{proof}
%Before continuing with the verification of dissipativity properties of the ground-truth system~\eqref{TrueSystem}, we link Definition~\ref{DefFSS1} and Lemma~\ref{LemFSS1} to the feasible system set considered in the set-membership identification literature \cite{Milanese}. 
%\begin{rmk}
%Since the matrices $Q_i,i=1,\dots,D$ can be calculated from the data set \eqref{DataSet}, $\text{FSS}_a$ can be seen as a data-based set-membership model of the unknown system~\eqref{TrueSystem}. A similar set-membership description for nonlinear systems has been examined for set-membership identification \cite{Milanese}. There, Lipschitz bounds on the system dynamics are considered in order to bound the variety of the system dynamics as otherwise there exist infinitely many systems that explain the data. Similarly, the variety of the system dynamics in Definition \ref{DefFSS1} is bounded by the assumption of a polynomial system with bounded degree. 
%\end{rmk}

Since $\text{FSS}_{\TM{\text{SB}}}$ contains the ground-truth systems, \eqref{TrueSystem} is dissipative if all systems of the feasible system set $\text{FSS}_{\TM{\text{SB}}}$ are dissipative. Based on this idea, the following theorem provides a data-based SOS condition for the verification of dissipativity properties without an identified model of \eqref{TrueSystem}.
\begin{thm}\label{ThmDissiVeriGenral}
	Let the data samples \eqref{DataSet} satisfy Assumption~\ref{Noise1}. Then system~\eqref{TrueSystem} is dissipative on \eqref{Constraints} w.r.t. the given supply rate $s\in\mathbb{R}[x,u]$ if there exist a storage function $\lambda\in\text{SOS}[x]$ and polynomials $s_i\in\text{SOS}[x,u,a],i=1,\dots,c$ and $t_i\in\text{SOS}[x,u,a],i=1,\dots,D$ such that $\psi\in\text{SOS}[x,u,a]$ with 
	\begin{align*}
	&\psi(x,u,a)=s(x,u){-}\lambda(Z(x,u)a){+}\lambda(x){+}\dots\notag\\
	& \sum_{i=1}^{D}\delta_i(\tilde{x}_i^+-Z(\tilde{x}_i,\tilde{u}_i)a)t_i(x,u,a){+}\sum_{i=1}^{c}p_i(x,u)s_i(x,u,a).
	\end{align*}
\end{thm}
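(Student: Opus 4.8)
The plan is to convert the joint SOS certificate into the dissipativity inequality \eqref{dissipativityInqu} by evaluating it at the (unknown) coefficient vector of the ground-truth system. First I would note that $\psi\in\text{SOS}[x,u,a]$ implies $\psi(x,u,a)\geq0$ for all $(x,u,a)\in\mathbb{R}^n\times\mathbb{R}^m\times\mathbb{R}^{n\ell}$, since any SOS polynomial is globally non-negative (Definition~\ref{DefSOSMatrix} with $r=1$). The essential point is that this holds for \emph{every} $a$, hence in particular for the true coefficients, so that no identification of $f$ is required.

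Second, since the samples satisfy Assumption~\ref{Noise1}, the ground-truth system lies in $\text{FSS}_{\text{SB}}$; equivalently there is a coefficient vector $\bar a$ with $f(x,u)=Z(x,u)\bar a$ and $\bar a\in\Sigma_{\text{SB}}$. By Lemma~\ref{LemFSS1}, this membership is precisely $\delta_i(\tilde{x}_i^+-Z(\tilde{x}_i,\tilde{u}_i)\bar a)\leq0$ for $i=1,\dots,D$. I would then substitute $a=\bar a$ into $\psi(x,u,a)\geq0$ and use $Z(x,u)\bar a=f(x,u)$, which turns the term $-\lambda(Z(x,u)a)$ into $-\lambda(f(x,u))$.

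Third, I would restrict to $(x,u)\in\mathbb{P}$ and examine the signs of the two multiplier sums at $a=\bar a$. Each $t_i$ and $s_i$ is SOS, hence non-negative; the data term satisfies $\delta_i(\tilde{x}_i^+-Z(\tilde{x}_i,\tilde{u}_i)\bar a)\leq0$ by the previous step, while $p_i(x,u)\leq0$ on $\mathbb{P}$ by \eqref{Constraints}. Therefore both $\sum_{i=1}^{D}\delta_i(\cdot)\,t_i(x,u,\bar a)$ and $\sum_{i=1}^{c}p_i(x,u)\,s_i(x,u,\bar a)$ are non-positive on $\mathbb{P}$, and dropping them from $\psi(x,u,\bar a)\geq0$ preserves the inequality. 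This leaves $s(x,u)-\lambda(f(x,u))+\lambda(x)\geq0$, i.e.\ exactly \eqref{dissipativityInqu}, for all $(x,u)\in\mathbb{P}$. Finally, $\lambda\in\text{SOS}[x]$ guarantees $\lambda(x)\geq0$, and being polynomial it is continuous, so $\lambda$ is an admissible storage function in the sense of Definition~\ref{DissiDef}.

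Conceptually this is the S-procedure / Positivstellensatz relaxation of Proposition~\ref{SOSRelaxation}, applied simultaneously in the variables $(x,u)$ and in the coefficients $a$. I do not anticipate a deep obstacle; the only delicate points are the careful book-keeping of the signs of the multiplier terms and the observation that, because the certificate is SOS in $a$ as well, it automatically covers the unknown true coefficient $\bar a$. It is worth noting that the boundedness of each $\mathcal{D}_i^{\text{SB}}$ from Assumption~\ref{Noise1} is not used in this sufficiency direction; only the existence of some $\bar a\in\Sigma_{\text{SB}}$ with $f=Z\bar a$, guaranteed by $f\in\text{FSS}_{\text{SB}}$, enters the argument.
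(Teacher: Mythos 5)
Your proposal is correct and follows essentially the same route as the paper: the paper simply cites Proposition~\ref{SOSRelaxation} to pass from $\psi\in\text{SOS}[x,u,a]$ to the constrained inequality \eqref{Inequ} over $\mathbb{P}\times\Sigma_{\text{SB}}$, whereas you inline that elementary sign/S-procedure argument and evaluate only at the ground-truth coefficients $\bar a\in\Sigma_{\text{SB}}$ rather than over all of $\Sigma_{\text{SB}}$. Your bookkeeping of the constraint signs (the $\leq 0$ form of $p_i$ and $\delta_i$ matching the $+$ signs in $\psi$) and your observation that boundedness of $\mathcal{D}_i^{\text{SB}}$ is not needed are both accurate.
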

\begin{proof}
	By Definition~\ref{DissiDef}, all systems of the feasible system set $\text{FSS}_{\TM{\text{SB}}}$, and hence system~\eqref{TrueSystem}, are dissipative on \eqref{Constraints} if there exists a continuous storage function $\lambda:\mathbb{X}\rightarrow\mathbb{R}_{\geq0}$ such that
	\begin{align}\label{Inequ}
	s(x,u)-\lambda(Z(x,u)a)+\lambda(x)\geq0
	\end{align}
	\TM{for all $(x,u)\in\TM{\mathbb{P}}$ and all $a\in\Sigma_{\TM{\text{SB}}}$. Since the sets $\mathbb{P}$ and $\Sigma_{\TM{\text{SB}}}$ are defined by polynomial inequalities in \eqref{Constraints} and Lemma~\ref{LemFSS1}, respectively, we can apply Proposition~\ref{SOSRelaxation} to conclude that \eqref{Inequ}} holds if there exist a storage function $\lambda\in\text{SOS}[x]$ and SOS polynomials $s_i\in\text{SOS}[x,u,a],i=1,\dots,c$ and $t_i\in\text{SOS}[x,u,a],i=1,\dots,D$ such that $\psi\in\text{SOS}[x,u,a]$.	
\end{proof}

Even though $Z(x,u)a$ is an unidentified polynomial vector in $\mathbb{R}[x,u]^{n}$, it is a known polynomial vector in $\mathbb{R}[x,u,a]^{n}$. For that reason, we can verify $\psi\in\text{SOS}[x,u,a]$ as an SOS problem with free variables $x,u$, and $a$ by applying standard SOS solvers, e.g., \cite{YALMIP}. For quadratically bounded noise \eqref{QuadBoundNoise}, we can achieve a\TM{n} SOS condition independent of $a$.

\begin{coro}\label{DissiVeriCoro}
	Let the data samples \eqref{DataSet} satisfy Assumption~\ref{Noise1} with $\delta_i$ from \eqref{QuadBoundNoise}. Then system~\eqref{TrueSystem} is dissipative on \eqref{Constraints} with respect to the supply rate $s\in\mathbb{R}[x,u]$ if there exist a storage function $\lambda(x)=x^TPx, P\succeq0$ and polynomials $t_i\in\text{SOS}[x,u],i=1,\dots,D$ and $s_i(x,u,a)=\begin{bmatrix}a^T&1\end{bmatrix}S_i(x,u)\begin{bmatrix}a^T& 1\end{bmatrix}^T, i=1,\dots,c,$ with $S_i\in\text{SOS}[x,u]^{(n\TM{\ell}+1)\times (n\TM{\ell}+1)}$ such that $\Psi\in\text{SOS}[x,u]^{(n\TM{\ell}+1)\times (n\TM{\ell}+1)}$ with
	\begin{align*}
	\Psi(x,u)=&\sum_{i=1}^{c}p_i(x,u)S_i(x,u)+\sum_{i=1}^{D}Q_it_i(x,u)\\
	\TM{+}& \begin{bmatrix}-Z(x,u)^TPZ(x,u) & 0\\ 0 & s(x,u)+x^TPx
	\end{bmatrix}
	\end{align*}
	and the data-dependent matrices 
	\begin{align*}
	Q_i {=}\begin{bmatrix}\tilde{Z}_i^T\Delta_1\tilde{Z}_i & -\tilde{Z}_i^T(\Delta_1\tilde{x}_i^++\Delta_2)\\
	-({\tilde{x}_i^{+^T}}\Delta_1+\Delta_2^T)\tilde{Z}_i & \begin{bmatrix}{\tilde{x}_i^{+}}\\1	
	\end{bmatrix}^T\begin{bmatrix}\Delta_1 & \Delta_2 \\ \Delta_2^T & \Delta_3\end{bmatrix} \begin{bmatrix}{\tilde{x}_i^{+}}\\1	
	\end{bmatrix}\vspace{0.1cm} \end{bmatrix}
	\end{align*}
	using the abbreviation $Z(\tilde{x}_i,\tilde{u}_i)=\tilde{Z}_i$.
\end{coro}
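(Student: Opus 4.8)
The plan is to specialize Theorem~\ref{ThmDissiVeriGenral} by restricting the free SOS multipliers to the structured classes stated here, and then to recast the resulting polynomial $\psi$, which is at most quadratic in the coefficient vector $a$, as a single quadratic form $\begin{bmatrix}a^T&1\end{bmatrix}\Psi(x,u)\begin{bmatrix}a^T&1\end{bmatrix}^T$ whose matrix $\Psi$ carries no $a$-dependence. First I would check that the proposed choices are admissible in Theorem~\ref{ThmDissiVeriGenral}: the storage function $\lambda(x)=x^TPx$ with $P\succeq0$ is an SOS polynomial, since $P=L^TL$ yields $\lambda(x)=\|Lx\|^2$; the multipliers $t_i\in\text{SOS}[x,u]$ remain SOS when regarded in the enlarged variable set $(x,u,a)$; and each $s_i(x,u,a)=\begin{bmatrix}a^T&1\end{bmatrix}S_i(x,u)\begin{bmatrix}a^T&1\end{bmatrix}^T$ lies in $\text{SOS}[x,u,a]$ because, by Definition~\ref{DefSOSMatrix}, $S_i\in\text{SOS}[x,u]^{(n\ell+1)\times(n\ell+1)}$ factors as $S_i=G^TG$, whence $s_i=\|G\begin{bmatrix}a^T&1\end{bmatrix}^T\|^2$.

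The central computation is the expansion of the quadratically bounded noise term. Substituting $\delta_i$ from \eqref{QuadBoundNoise} evaluated at $\tilde{x}_i^+-\tilde{Z}_i a$ with $\tilde{Z}_i=Z(\tilde{x}_i,\tilde{u}_i)$, and collecting the contributions that are quadratic, linear, and constant in $a$, I would verify $\delta_i(\tilde{x}_i^+-\tilde{Z}_i a)=\begin{bmatrix}a^T&1\end{bmatrix}Q_i\begin{bmatrix}a^T&1\end{bmatrix}^T$ with exactly the $Q_i$ displayed in the statement: the $(1,1)$ block collects $a^T\tilde{Z}_i^T\Delta_1\tilde{Z}_i a$, the symmetric off-diagonal blocks collect the cross term $-2a^T\tilde{Z}_i^T(\Delta_1\tilde{x}_i^++\Delta_2)$, and the $(2,2)$ entry is the residual quadratic form $\begin{bmatrix}\tilde{x}_i^+&1\end{bmatrix}\begin{bmatrix}\Delta_1 & \Delta_2\\ \Delta_2^T & \Delta_3\end{bmatrix}\begin{bmatrix}\tilde{x}_i^+&1\end{bmatrix}^T$.

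It then remains to assemble the five contributions to $\psi$ into one quadratic form. Writing $-\lambda(Z(x,u)a)=-a^TZ(x,u)^TPZ(x,u)a$ in the $(1,1)$ block, placing $s(x,u)+x^TPx$ in the $(2,2)$ entry, and using that $t_i$ and $p_i$ are scalar factors independent of $a$, every summand takes the form $\begin{bmatrix}a^T&1\end{bmatrix}(\cdot)\begin{bmatrix}a^T&1\end{bmatrix}^T$, and the matrices add up to exactly $\Psi(x,u)$. Finally, if $\Psi\in\text{SOS}[x,u]^{(n\ell+1)\times(n\ell+1)}$ then, by the factorization $\Psi=H^TH$ of Definition~\ref{DefSOSMatrix}, one obtains $\psi=\|H\begin{bmatrix}a^T&1\end{bmatrix}^T\|^2\in\text{SOS}[x,u,a]$, so the hypotheses of Theorem~\ref{ThmDissiVeriGenral} hold for the chosen $\lambda,t_i,s_i$ and dissipativity of \eqref{TrueSystem} on \eqref{Constraints} follows. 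I expect the only delicate point to be the bookkeeping in the noise expansion—matching the sign conventions and the symmetric off-diagonal blocks of $Q_i$—but this is routine linear algebra rather than a genuine obstacle.
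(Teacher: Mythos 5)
Your proposal is correct and follows essentially the same route as the paper's own proof: specialize Theorem~\ref{ThmDissiVeriGenral} with the structured choices $\lambda(x)=x^TPx$, $t_i\in\text{SOS}[x,u]$, and $s_i=\begin{bmatrix}a^T&1\end{bmatrix}S_i\begin{bmatrix}a^T&1\end{bmatrix}^T$, rewrite the noise term as $\begin{bmatrix}a^T&1\end{bmatrix}Q_i\begin{bmatrix}a^T&1\end{bmatrix}^T$ so that $\psi=\begin{bmatrix}a^T&1\end{bmatrix}\Psi(x,u)\begin{bmatrix}a^T&1\end{bmatrix}^T$, and conclude $\psi\in\text{SOS}[x,u,a]$ from the factorization $\Psi=\Phi^T\Phi$ of Definition~\ref{DefSOSMatrix}. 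Your explicit admissibility checks and the block-by-block expansion of $Q_i$ are just a more detailed writing-out of steps the paper leaves implicit.
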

\begin{proof}
	Note that the quadratically bounded noise \eqref{QuadBoundNoise} yields $\delta_i(\tilde{x}_i^+-Z(\tilde{x}_i,\tilde{u}_i)a) = \begin{bmatrix}a^T&1\end{bmatrix}Q_i\begin{bmatrix}a^T&1\end{bmatrix}^T$. Then pursuing the proof of Theorem~\ref{ThmDissiVeriGenral}, system~\eqref{TrueSystem} is dissipative if there exist a $P\succeq0$, $t_i\in\text{SOS}[x,u],i=1,\dots,D$, and $S_i\in\text{SOS}[x,u]^{(n\TM{\ell}+1)\times (n\TM{\ell}+1)}, i=1,\dots,c$ such that 
	\begin{align}\label{poly1}
	\TM{\begin{bmatrix}a\\1\end{bmatrix}^T\Psi(x,u)\begin{bmatrix}a\\1\end{bmatrix}\in\text{SOS}[x,u,a].}
	\end{align}
	\TM{If $\Psi\in\text{SOS}[x,u]^{(n\TM{\ell}+1)\times (n\TM{\ell}+1)}$ then there exists a $\Phi\in\mathbb{R}[x,u]^{q\times (n\TM{\ell}+1)}$ with $\Psi=\Phi^T\Phi$ by Definition~\ref{DefSOSMatrix}. Therefore, \eqref{poly1} is an SOS polynomial by Definition~\ref{DefSOSMatrix}.}
\end{proof}

For a closer look on Theorem~\ref{ThmDissiVeriGenral} and Corollary~\ref{DissiVeriCoro}, we refer to Section~\ref{SecCompare} and finish this section with an extension of Theorem~\ref{ThmDissiVeriGenral} and Corollary~\ref{DissiVeriCoro}, respectively. %First, we study the case if prior knowledge on the system dynamics~\eqref{TrueSystem} is available and second, parametrized storage function are employed to ease the conservatism of Theorem~\ref{ThmDissiVeriGenral} and Corollary~\ref{DissiVeriCoro} at the small expense of increased computational complexity.
%\begin{rmk}\label{RmkModelKnow}
%If we have prior insight to the system dynamics~\eqref{TrueSystem}, then we could consider instead of \eqref{Systemdescribtion} the system dynamics $x(t+1)=F(x(t),u(t))a+G(x(t),u(t))$, with unidentified coefficients $a\in\mathbb{R}^{n\TM{\ell}}$, known matrix $F\in\mathbb{R}[x,u]^{n\times n\TM{\ell}}$, and known vector $G\in\mathbb{R}[x,u]^n$. %For this system dynamics, we can analogously derive $\text{FFS}_a$ and conditions for verifying dissipativity properties similar to Theorem~\ref{ThmDissiVeriGenral} and Corollary~\ref{DissiVeriCoro}
%\end{rmk}

\begin{rmk}\label{RmkParaStorage}
	To exclude time-varying coefficients $a(t)\in\Sigma_{\TM{\text{SB}}}$ in Theorem~\ref{ThmDissiVeriGenral} and Corollary~\ref{DissiVeriCoro} and hence to reduce their conservatism, we could consider parametrized storage functions $\lambda\in\text{SOS}[x,a]$ %and 
	%\begin{equation*}
	%	\lambda(x,a)=\begin{bmatrix}x\\a\\1\end{bmatrix}^TP
	%	\begin{bmatrix}x\\a\\ 1\end{bmatrix},P\succeq0,
	%\end{equation*}
	%respectively, 
	and the dissipativity inequality
	\begin{align*}
	\lambda(Z(x,u)a,a)-\lambda(x,a)\leq s(x,u),\ \forall (x,u)\in\TM{\mathbb{P}}, \forall a\in\Sigma_{\TM{\text{SB}}}.
	\end{align*}
	
	%Theorem~\ref{ThmDissiVeriGenral} is based on the verification of dissipativity inequality~\eqref{dissipativityInqu} for all systems in the feasible system set $\text{FSS}_a$ using state-depended storage functions. Hence, we actually imply by Theorem~\ref{ThmDissiVeriGenral} dissipativity properties for system~\eqref{Systemdescribtion} with time-varying coefficients $a(t)\in\Sigma_a$. Since the ground-truth system dynamics~\eqref{TrueSystem} are supposed to be polynomial with constant coefficients, state-depended storage functions increase the conservatism of Theorem~\ref{ThmDissiVeriGenral}. To prevent time-varying coefficients and hence improve the accuracy of Theorem~\ref{ThmDissiVeriGenral} and Corollary~\ref{DissiVeriCoro}, respectively, we could consider parametric storage functions $\lambda\in\text{SOS}[x,a]$ or 
	%\begin{equation*}
	%	\lambda(x,a)=\begin{bmatrix}x\\a\\1\end{bmatrix}^TP
	%	\begin{bmatrix}x\\a\\ 1\end{bmatrix},P=\begin{bmatrix}P_{11} & P_{12} & P_{13}\\ P_{12}^T & 0 & 0\\P_{13}^T & 0 & 0\end{bmatrix}\succeq0,
	%\end{equation*}
	%respectively, and the dissipativity inequality
	%\begin{align*}
	%	\lambda(Z(x,u)a,a)-\lambda(x,a)\leq s(x,u),\ \forall (x,u)\in\bb{Z}, \forall a\in\Sigma_a.
	%\end{align*}
	%Since this dissipativity inequality corresponds to the dissipativity of
	%\begin{align*}
	%	x(t+1)&=Z(x(t),u(t))a(t)\\
	%	a(t+1)&=a(t),
	%\end{align*}
	%the coefficients $a\in\Sigma_a$ are constant over time.	
\end{rmk}

\section{Data-driven dissipativity verification for cumulatively bounded noise}\label{SecNonCons}

We again tackle the problem of verifying whether the unidentified polynomial system \eqref{TrueSystem} is dissipative by means of noisy data. However, instead of bounding the noise separately in time as in the previous section, the noise is characterized by one property that bounds cumulatively the noise realizations of the data samples \eqref{DataSet}, which was first proposed in \cite{Groningen}. 

\begin{assu}[Cumulatively bounded noise]\label{Noise2}
	For the measured data \eqref{DataSet}, suppose that the matrix \TM{$\tilde{D}=\begin{bmatrix}\tilde{d}_1&\cdots&\tilde{d}_D\end{bmatrix}$ is an element of} 
	\begin{equation}\label{noise}
	\TM{\mathcal{D}^{\text{CB}}{=}\left\{F\in\mathbb{R}^{n\times D}{:}\begin{bmatrix}F^T\\I_n\end{bmatrix}^T\begin{bmatrix}\varDelta_1 & \varDelta_2\\ \varDelta_2^T & \varDelta_3\end{bmatrix}\begin{bmatrix}F^T\\I_n\end{bmatrix}{\prec0} \right\}}
	\end{equation}
	with $\Delta_1\succeq0$.
\end{assu}

By Assumption~\ref{Noise2}, all noise realizations $\tilde{d}_1,\dots,\tilde{d}_D$ are cumulatively bounded as $\Delta_1\succeq0$. Exemplary, \eqref{noise} incorporates noise with (strictly) bounded energy $\sum_{i=1}^{D}\tilde{d}_i^T\tilde{d}_i<\delta_{\text{e}}^2$ by $\tilde{D}\tilde{D}^T\prec \delta^2_{\text{e}}I_n$. %Furthermore, $\tilde{D}=0$ satisfies Assumption~\ref{Noise2} as $\varDelta_3\prec0$.
%Note that the strictness of \eqref{noise} could be switched with $\Delta_1\succeq0$ while the results of the remainder of this section could be adapted. 
%For more details on the noise description in Assumption~\ref{Noise2}, we refer to \cite{Groningen} and to Section~\ref{SecCompare} for a comparison with Assumption~\ref{Noise1}.
\\\indent
Analogously to \cite{Groningen} and Section~\ref{SecSOS}, combining Assumption~\ref{Noise2}, data samples \eqref{DataSet}, and the system dynamics 
\begin{align}\label{Systemdescribtion2}
x(t+1)=Az(x(t),u(t)),
\end{align}
with unidentified coefficients $A\in\mathbb{R}^{n\times \TM{\ell}}$, yields a data-based set-membership representation of the ground-truth system~\eqref{TrueSystem} which is summarized in the following definition and lemma.
\begin{defn}[Feasible system set]\label{DefFSS2}
	The set of all systems \eqref{Systemdescribtion2} admissible with the measured data set \eqref{DataSet} for cumulatively bounded noise \eqref{noise} is given by the feasible system set $\text{FSS}_{\TM{\text{CB}}}=\{Az\in\mathbb{R}[x,u]^n:A\in\Sigma_{\TM{\text{CB}}}\}$ with $\Sigma_{\TM{\text{CB}}} = \{A\in\mathbb{R}^{n\times\ell}:\TM{\exists \begin{bmatrix}\tilde{d}_1&\cdots&\tilde{d}_D\end{bmatrix}\in\mathcal{D}^{\text{CB}}} \text{\ satisfying\ }\tilde{x}_i^+=Az(\tilde{x}_i,\tilde{u}_i)+\tilde{d}_i,i=1,\dots,D\}$.
\end{defn}

\begin{lem}\label{LemSigmaA}
	The set of all coefficients $\Sigma_{\TM{\text{CB}}}$ for which system \eqref{Systemdescribtion2} explains the measured data set \eqref{DataSet} for cumulatively bounded noise~\eqref{noise} is equivalent to
	\begin{align}\label{FSSA1}
	\left\{A\in\mathbb{R}^{n\times\ell}:\begin{bmatrix}A^T\\I_n\end{bmatrix}^T\begin{bmatrix}\tilde{\varDelta}_1 & \tilde{\varDelta}_2\\ \tilde{\varDelta}_2^T & \tilde{\varDelta}_3\end{bmatrix}\begin{bmatrix}A^T\\I_n\end{bmatrix}\prec0\right\}
	\end{align}
	with the data-dependent matrices $\tilde{X}^+=\begin{bmatrix}\tilde{x}_1^+&\cdots&\tilde{x}_D^+\end{bmatrix}$, $\tilde{Z}=\begin{bmatrix}z(\tilde{x}_1,\tilde{u}_1)&\cdots&z(\tilde{x}_D,\tilde{u}_D)\end{bmatrix}$, and
	\begin{align*}
	&\begin{bmatrix}\tilde{\varDelta}_1 & \tilde{\varDelta}_2\\ \tilde{\varDelta}_2^T & \tilde{\varDelta}_3\end{bmatrix}\\
	&\hspace{0.2cm} {=}\begin{bmatrix}\tilde{Z}\varDelta_1\tilde{Z}^T & -\tilde{Z}(\varDelta_1\tilde{X}^{+^T}+\varDelta_2)\\-(\tilde{X}^+\varDelta_1{+}\varDelta_2^T)\tilde{Z}^T & \begin{bmatrix}\tilde{X}^{+^T}\\I_n\end{bmatrix}^T\begin{bmatrix}\varDelta_1 & \varDelta_2\\ \varDelta_2^T & \varDelta_3\end{bmatrix}\begin{bmatrix}\tilde{X}^{+^T}\\I_n\end{bmatrix} \end{bmatrix}.
	\end{align*}
\end{lem}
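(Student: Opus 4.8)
The plan is to prove both set inclusions of \eqref{FSSA1} at once by exhibiting a single algebraic identity that converts the cumulative noise bound \eqref{noise} into the quadratic matrix inequality \eqref{FSSA1}. The starting point is to stack the $D$ data equations $\tilde{x}_i^+=Az(\tilde{x}_i,\tilde{u}_i)+\tilde{d}_i$ from Definition~\ref{DefFSS2} into matrix form $\tilde{X}^+=A\tilde{Z}+\tilde{D}$, where $\tilde{D}=\begin{bmatrix}\tilde{d}_1&\cdots&\tilde{d}_D\end{bmatrix}$. The crucial observation is that, for any fixed $A$, the data equations force $\tilde{D}=\tilde{X}^+-A\tilde{Z}$, so the noise matrix is \emph{uniquely} determined by $A$. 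Hence $A\in\Sigma_{\text{CB}}$ holds precisely when this unique $\tilde{D}$ lies in $\mathcal{D}^{\text{CB}}$, and it suffices to rewrite the membership $\tilde{X}^+-A\tilde{Z}\in\mathcal{D}^{\text{CB}}$ as an inequality in $A$.

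First I would substitute $\tilde{D}^T=\tilde{X}^{+^T}-\tilde{Z}^T A^T$ into the quadratic form defining $\mathcal{D}^{\text{CB}}$ and factor out the dependence on $A$ through the congruence
\[
\begin{bmatrix}\tilde{D}^T\\I_n\end{bmatrix}=\underbrace{\begin{bmatrix}-\tilde{Z}^T & \tilde{X}^{+^T}\\ 0 & I_n\end{bmatrix}}_{=:T}\begin{bmatrix}A^T\\I_n\end{bmatrix}.
\]
With this factorization, the noise inequality $\begin{bmatrix}\tilde{D}^T\\I_n\end{bmatrix}^T\begin{bmatrix}\varDelta_1 & \varDelta_2\\ \varDelta_2^T & \varDelta_3\end{bmatrix}\begin{bmatrix}\tilde{D}^T\\I_n\end{bmatrix}\prec0$ becomes $\begin{bmatrix}A^T\\I_n\end{bmatrix}^T\bigl(T^T\begin{bmatrix}\varDelta_1 & \varDelta_2\\ \varDelta_2^T & \varDelta_3\end{bmatrix}T\bigr)\begin{bmatrix}A^T\\I_n\end{bmatrix}\prec0$. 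A direct block multiplication of $T^T\begin{bmatrix}\varDelta_1 & \varDelta_2\\ \varDelta_2^T & \varDelta_3\end{bmatrix}T$ then reproduces exactly the four blocks $\tilde{\varDelta}_1=\tilde{Z}\varDelta_1\tilde{Z}^T$, $\tilde{\varDelta}_2=-\tilde{Z}(\varDelta_1\tilde{X}^{+^T}+\varDelta_2)$, and $\tilde{\varDelta}_3=\begin{bmatrix}\tilde{X}^{+^T}\\I_n\end{bmatrix}^T\begin{bmatrix}\varDelta_1 & \varDelta_2\\ \varDelta_2^T & \varDelta_3\end{bmatrix}\begin{bmatrix}\tilde{X}^{+^T}\\I_n\end{bmatrix}$ stated in the lemma, which is precisely the left-hand side of \eqref{FSSA1}.

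The two set inclusions then follow symmetrically from this identity. For the direction $A\in\Sigma_{\text{CB}}\Rightarrow$ \eqref{FSSA1}, I take the $\tilde{D}\in\mathcal{D}^{\text{CB}}$ guaranteed by Definition~\ref{DefFSS2}, note that it equals $\tilde{X}^+-A\tilde{Z}$, and apply the identity to land in \eqref{FSSA1}. Conversely, given $A$ satisfying \eqref{FSSA1}, I \emph{construct} $\tilde{D}:=\tilde{X}^+-A\tilde{Z}$, equivalently $\tilde{d}_i:=\tilde{x}_i^+-Az(\tilde{x}_i,\tilde{u}_i)$; every data equation $\tilde{x}_i^+=Az(\tilde{x}_i,\tilde{u}_i)+\tilde{d}_i$ holds by construction, and reading the congruence identity backwards shows $\tilde{D}\in\mathcal{D}^{\text{CB}}$, so $A\in\Sigma_{\text{CB}}$.

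I expect the only genuine work to be the block-matrix expansion of $T^T\begin{bmatrix}\varDelta_1 & \varDelta_2\\ \varDelta_2^T & \varDelta_3\end{bmatrix}T$; this is purely mechanical but must be carried out carefully to match the signs and transposes appearing in the lemma's data-dependent matrices. Everything else reduces to the observation that the noise realization is uniquely pinned down by $A$ through the data, so that the existential quantifier in Definition~\ref{DefFSS2} collapses and the single congruence transformation transfers the bound in both directions without loss.
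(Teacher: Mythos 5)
Your proposal is correct and is essentially the paper's argument carried out in full: the paper's own proof merely states that the claim follows analogously to the cited reference (Lemma 4 of the matrix S-lemma paper) and to the proof of Lemma~\ref{LemFSS1}, which is exactly the combination you use --- the collapse of the existential quantifier because $\tilde{D}=\tilde{X}^+-A\tilde{Z}$ is uniquely determined by $A$ (the Lemma~\ref{LemFSS1} argument), plus the congruence with $T=\begin{bmatrix}-\tilde{Z}^T & \tilde{X}^{+^T}\\ 0 & I_n\end{bmatrix}$ whose block expansion reproduces $\tilde{\varDelta}_1,\tilde{\varDelta}_2,\tilde{\varDelta}_3$ (the computation the citation supplies). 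Your block algebra checks out, so your write-up is in fact a more self-contained version of what the paper outsources to the reference.
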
\vspace{0.2cm}
\begin{proof}
	The statement follows analogously to \cite{Groningen} (Lemma 4) and the proof of Lemma~\ref{LemFSS1}, respectively.
\end{proof}

\TM{Since the data-based description of $\Sigma_{\TM{\text{CB}}}$ in Lemma~\ref{LemSigmaA} provides a bound on $A^T$ instead of $A$ as will be required for the verification of the \textquotedblleft primal" dissipativity inequality~\eqref{dissipativityInqu}}, we introduce the dual version of \eqref{FSSA1} as in \cite{AnneDissi}.

\begin{lem}\label{LemmaDual}
	Suppose that Assumption~\ref{Noise2} holds and the inverse
	\begin{equation}\label{InverseCond}
	\begin{bmatrix}-\tilde{\varDelta}_1 & \tilde{\varDelta}_2\\ \tilde{\varDelta}_2^T & -\tilde{\varDelta}_3\end{bmatrix}^{-1}=:\begin{bmatrix}\bar{\varDelta}_1 & \bar{\varDelta}_2\\ \bar{\varDelta}_2^T & \bar{\varDelta}_3\end{bmatrix}
	\end{equation}
	exists. Then any matrix $A\in\mathbb{R}^{n\times \TM{\ell}}$ is an element of $\Sigma_{\TM{\text{CB}}}$ if and only if 
	\begin{equation*}\label{FSSA2}
	A\in\overline{\Sigma}_{\TM{\text{CB}}}=\left\{A\in\mathbb{R}^{n\times\ell}:\begin{bmatrix}I_{\TM{\ell}}\\A\end{bmatrix}^T\begin{bmatrix}\bar{\varDelta}_1 & \bar{\varDelta}_2\\ \bar{\varDelta}_2^T & \bar{\varDelta}_3\end{bmatrix}\begin{bmatrix}I_{\TM{\ell}}\\A\end{bmatrix}\prec0\right\}.
	\end{equation*}
\end{lem}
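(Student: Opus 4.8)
The plan is to identify the passage from the primal description of $\Sigma_{\text{CB}}$ in Lemma~\ref{LemSigmaA} to $\overline{\Sigma}_{\text{CB}}$ as an application of the standard dualization lemma for quadratic matrix inequalities, exactly as used in \cite{AnneDissi}. Abbreviating the central matrix by $N=\begin{bmatrix}\tilde{\varDelta}_1 & \tilde{\varDelta}_2\\ \tilde{\varDelta}_2^T & \tilde{\varDelta}_3\end{bmatrix}$, the membership $A\in\Sigma_{\text{CB}}$ reads $U^TNU\prec0$ with $U=\begin{bmatrix}A^T\\I_n\end{bmatrix}$ by Lemma~\ref{LemSigmaA}, and the dualization lemma requires $N$ to be nonsingular with a prescribed inertia.

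First I would rewrite the defining inequality of $\overline{\Sigma}_{\text{CB}}$ in terms of $N^{-1}$. With the signature matrix $T=\begin{bmatrix}I_\ell & 0\\ 0 & -I_n\end{bmatrix}$, a short computation shows that the sign-flipped matrix in \eqref{InverseCond} equals $-TNT$, so that $\begin{bmatrix}\bar{\varDelta}_1 & \bar{\varDelta}_2\\ \bar{\varDelta}_2^T & \bar{\varDelta}_3\end{bmatrix}=(-TNT)^{-1}=-TN^{-1}T$. Since $T\begin{bmatrix}I_\ell\\A\end{bmatrix}=\begin{bmatrix}I_\ell\\-A\end{bmatrix}=:V$, the condition $A\in\overline{\Sigma}_{\text{CB}}$ becomes $-V^TN^{-1}V\prec0$, i.e.\ $V^TN^{-1}V\succ0$. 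A direct multiplication gives $U^TV=A-A=0$, so $U$ and $V$ have orthogonal column spaces of complementary dimensions $n$ and $\ell$, which is precisely the geometric configuration the dualization lemma needs.

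It then remains to certify the inertia of $N$, which is the heart of the argument. Nonsingularity is immediate from the assumed existence of the inverse in \eqref{InverseCond} together with the invertibility of $T$. For the signs of the eigenvalues I would establish two matching bounds: since $\varDelta_1\succeq0$ yields $\tilde{\varDelta}_1=\tilde{Z}\varDelta_1\tilde{Z}^T\succeq0$, the form $N$ is positive semidefinite on the $\ell$-dimensional subspace $\{(\xi,0)\}$, so any subspace on which $N\prec0$ meets it only in the origin and can have dimension at most $n$; conversely, Assumption~\ref{Noise2} guarantees that the true coefficient matrix lies in $\Sigma_{\text{CB}}$, furnishing a full-column-rank $U_*=\begin{bmatrix}A_*^T\\I_n\end{bmatrix}$ with $U_*^TNU_*\prec0$ and hence a negative subspace of dimension at least $n$. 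Together with nonsingularity this fixes the inertia of $N$ at $\ell$ positive and $n$ negative eigenvalues. Invoking the dualization lemma with this inertia then gives $U^TNU\prec0\Leftrightarrow V^TN^{-1}V\succ0$, which is the claimed equivalence $A\in\Sigma_{\text{CB}}\Leftrightarrow A\in\overline{\Sigma}_{\text{CB}}$.

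The hard part will be the inertia bookkeeping rather than the algebra: the equivalence genuinely fails without the exact inertia, so both the upper bound (from $\varDelta_1\succeq0$) and the lower bound (from nonemptiness of $\Sigma_{\text{CB}}$, which is where Assumption~\ref{Noise2} enters) must be argued. The accompanying sign tracking through $T$ and the inverse in \eqref{InverseCond} is routine but must be carried out with care so as to land exactly on the blocks $\bar{\varDelta}_1,\bar{\varDelta}_2,\bar{\varDelta}_3$ as stated.
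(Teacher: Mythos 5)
Your proposal is correct and matches the paper's own proof in essence: both arguments rest on the dualization lemma, with $\varDelta_1\succeq0$ and the ground-truth system's membership in $\Sigma_{\text{CB}}$ (guaranteed by Assumption~\ref{Noise2}) supplying exactly the nonsingularity and inertia hypotheses that lemma requires. The only difference is presentational — you carry out the inertia counting explicitly, whereas the paper delegates it to the block-definiteness formulation of the dualization lemma cited from the Scherer--Weiland notes, obtaining $\bar{\varDelta}_3\succeq0$ along the way.
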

\begin{proof}
	\TM{Since the samples \eqref{DataSet} satisfy $\tilde{x}_i^+=f(\tilde{x}_i,\tilde{u}_i)+\tilde{d}_i$ with $\begin{bmatrix}\tilde{d}_1&\cdots&\tilde{d}_D\end{bmatrix}\in\mathcal{D}^{\text{CB}}$ by assumption, the coefficient matrix $A_{\text{gt}}$ of the ground-truth system~\eqref{TrueSystem}, i.e., $f(x,u)=A_{\text{gt}}z(x,u)$, is an element of $\Sigma_{\TM{\text{CB}}}$. Together with $\Delta_1\succeq0$, the dualization lemma~\cite{SchererLMI} implies that ($A_{\text{gt}}\in\overline{\Sigma}_{\TM{\text{CB}}}$ and) $\bar{\varDelta}_3\succeq0$. Thereby, any matrix $A\in\mathbb{R}^{n\times \TM{\ell}}$ satisfies $A\in\Sigma_{\TM{\text{CB}}}$ if and only if $A\in\overline{\Sigma}_{\TM{\text{CB}}}$ again by the dualization lemma.}
\end{proof}

\TM{By Lemma~\ref{LemmaDual}}, the feasible system sets $\text{FSS}_{\TM{\text{CB}}}$ and $\overline{\text{FSS}}_{\TM{\text{CB}}}=\{Az\in\mathbb{R}[x,u]^n:A\in\bar{\Sigma}_{\TM{\text{CB}}}\}$ are equivalent and contain the ground-truth system~\eqref{TrueSystem}. Therefore, we can derive analogously to Section~\ref{SecSOS} a condition to verify dissipativity properties of polynomial system~\eqref{TrueSystem} without identifying a model but directly from noisy input-state measurement. 
\begin{thm}\label{ThmDissiVeriNonCons}
	Suppose that the data samples \eqref{DataSet} satisfy Assumption~\ref{Noise2}, the inverse \eqref{InverseCond} exists, the state-inputs constraints \eqref{Constraints} are specified by
	\begin{equation}\label{Constraints1}
	p_i(x,u)= \begin{bmatrix}z(x,u)\\1\end{bmatrix}^TP_{i}\begin{bmatrix}z(x,u)\\1\end{bmatrix},i=1,\dots,c,
	\end{equation}	
	with $P_i\in\mathbb{R}^{(\ell+1)\times(\ell+1)}$ and, without loss of generality, there exist matrices $T_x\in\mathbb{R}^{n\times \TM{\ell}}$ and $T\in\mathbb{R}^{(n+m)\times \TM{\ell}}$ such that $x=T_x z$ and $\begin{bmatrix}x\\u\end{bmatrix}=Tz$. Then system~\eqref{TrueSystem} is $(Q,S,R)$-dissipative on \eqref{Constraints} with quadratic constraints \eqref{Constraints1} if \TM{the LMI~\eqref{LMICond} 
		\begin{figure*}
			\begin{align}\label{LMICond}
			\Theta:=\begin{bmatrix}\begin{array}{c}	\begin{matrix}I_n & 0 & 0\\ 0 & T_x & 0\end{matrix}\\\hline\begin{matrix}0 & T & 0\end{matrix}\\\hline \begin{matrix}0 &I_{\TM{\ell}} & 0\\ I_n& 0 & 0\end{matrix}\\\hline \begin{matrix}0 & I_{\TM{\ell}} & 0\\ 0 & 0 & 1\end{matrix}\end{array}\end{bmatrix}^T
			\begin{bmatrix}\begin{array}{c|c|c|c}
			\begin{matrix}\TM{-}P & 0 \\ 0 & P\end{matrix} & \begin{matrix}0 & 0 \\ 0 & 0 \end{matrix} & \begin{matrix}0 & 0 \\ 0 & 0\end{matrix} & \begin{matrix} 0 \\ 0\end{matrix}\\\hline
			\begin{matrix}0 & 0 \\ 0 & 0\end{matrix} & \begin{matrix}Q & S \\ S^T & R \end{matrix} & \begin{matrix}0 & 0 \\ 0 & 0\end{matrix} & \begin{matrix} 0 \\ 0\end{matrix}\\\hline
			\begin{matrix}0 & 0 \\ 0 & 0\end{matrix} & \begin{matrix}0 & 0 \\ 0 & 0 \end{matrix} & \begin{matrix}\tau\bar{\varDelta}_1\phantom{\Big|} & \tau\bar{\varDelta}_2 \\ \tau\bar{\varDelta}_2^T & \tau\bar{\varDelta}_3\end{matrix} & \begin{matrix} 0 \\ 0\end{matrix}\\\hline		
			\begin{matrix}0 & 0 \end{matrix} & \begin{matrix}0 & 0 \end{matrix} & \begin{matrix}0 & 0\end{matrix} & \sum_{i=1}^{c} \tilde{P}_i(\tau_i)	\phantom{\Big|}
			\end{array}		
			\end{bmatrix}	
			\begin{bmatrix}\begin{array}{c}	\begin{matrix}I_n & 0 & 0\\ 0 & T_x & 0\end{matrix}\\\hline\begin{matrix}0 & T & 0\end{matrix}\\\hline \begin{matrix}0 &I_{\TM{\ell}} & 0\\ I_n& 0 & 0\end{matrix}\\\hline \begin{matrix}0 & I_{\TM{\ell}} & 0\\ 0 & 0 & 1\end{matrix}\end{array}\end{bmatrix}\TM{\succeq}\, 0
			\end{align}
			\begin{tikzpicture}
			\draw[-,line width=0.7pt] (0,0) -- (18,0);
			\end{tikzpicture}
		\end{figure*}
		holds for a storage function $\lambda(x)=x^TPx, P\succeq0$, a constant $\tau\geq0$, and polynomials $z_i\tau_i\in\text{SOS}[x,u],i=1,\dots,c$ with a vector of monomials $z_i\in\mathbb{R}[x,u]^{1\times\beta}$, to-be-optimized coefficients $\tau_i\in\mathbb{R}^\beta$, and a linear mapping $\tilde{P}_i:\mathbb{R}^\beta\rightarrow\mathbb{R}^{(\ell+1)\times(\ell+1)}$ with
		\begin{align}\label{QuadraticDecomp}
		z_i\tau_i\begin{bmatrix}z\\1\end{bmatrix}^TP_i\begin{bmatrix}z\\1\end{bmatrix} = \begin{bmatrix}z\\1\end{bmatrix}^T\tilde{P}_i(\tau_i)\begin{bmatrix}z\\1\end{bmatrix}.
		\end{align}}
\end{thm}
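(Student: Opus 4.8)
The plan is to reuse the logic behind Theorem~\ref{ThmDissiVeriGenral}: by Lemma~\ref{LemmaDual} the dual feasible system set $\overline{\text{FSS}}_{\text{CB}}$ equals $\text{FSS}_{\text{CB}}$ and contains the ground-truth system, so it suffices to certify that \emph{every} admissible system is dissipative, i.e. that
\[
s(x,u)-\lambda(Az(x,u))+\lambda(x)\geq 0
\]
for all $(x,u)\in\mathbb{P}$ and all $A\in\overline{\Sigma}_{\text{CB}}$, where $\lambda(x)=x^TPx$. The whole argument is a matrix S-procedure: I would add to the left-hand side the two constraint certificates --- the dual bound $\begin{bmatrix}I_\ell\\A\end{bmatrix}^T\bar{\varDelta}\begin{bmatrix}I_\ell\\A\end{bmatrix}\prec0$ weighted by the scalar multiplier $\tau\geq0$, and the state-input constraints $p_i\leq0$ weighted by SOS multipliers --- and show that $\Theta\succeq0$ is exactly the certificate that the aggregated quadratic form is globally non-negative.

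The first concrete step is to recognise the test vector on which $\Theta$ acts. Let $N$ denote the tall outer factor in \eqref{LMICond} and set $v=\begin{bmatrix}(Az)^T & z^T & 1\end{bmatrix}^T$. A block multiplication, using $x=T_xz$ and $\begin{bmatrix}x\\u\end{bmatrix}=Tz$, produces the four row-blocks $\begin{bmatrix}Az\\x\end{bmatrix}$, $\begin{bmatrix}x\\u\end{bmatrix}$, $\begin{bmatrix}z\\Az\end{bmatrix}$, and $\begin{bmatrix}z\\1\end{bmatrix}$. I would then expand $v^T\Theta v=(Nv)^T(\cdot)(Nv)$ block by block against the block-diagonal centre matrix: the first block yields $-\lambda(Az)+\lambda(x)$, the second yields $s(x,u)$, the third yields $\tau\,z^T\begin{bmatrix}I_\ell\\A\end{bmatrix}^T\bar{\varDelta}\begin{bmatrix}I_\ell\\A\end{bmatrix}z$, and the fourth, via the reparametrization \eqref{QuadraticDecomp} together with \eqref{Constraints1}, yields $\sum_{i=1}^{c}(z_i\tau_i)\,p_i(x,u)$.

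The decisive step is then a sign reading. Fix $(x,u)\in\mathbb{P}$ and $A\in\overline{\Sigma}_{\text{CB}}$. From $\Theta\succeq0$ we obtain $v^T\Theta v\geq0$; the third term is non-positive because $\tau\geq0$ and the dual constraint makes the inner matrix negative definite; and every summand of the fourth term is non-positive because $z_i\tau_i$ is SOS, hence non-negative, while $p_i(x,u)\leq0$ on $\mathbb{P}$. Moving these non-positive contributions to the other side leaves exactly $s(x,u)-\lambda(Az)+\lambda(x)\geq0$. As this holds for every $A\in\overline{\Sigma}_{\text{CB}}$, and the ground-truth coefficient matrix lies in $\overline{\Sigma}_{\text{CB}}$ by Lemma~\ref{LemmaDual}, system~\eqref{TrueSystem} satisfies \eqref{dissipativityInqu} and is therefore $(Q,S,R)$-dissipative on \eqref{Constraints}.

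I expect the main difficulty to be bookkeeping rather than conceptual. The delicate points are verifying that the congruence by $N$ reproduces exactly these four quadratic forms --- in particular that the $-P/P$ split and the interchanged $z$/$Az$ rows place the storage difference and the $A$-bound on the correct arguments --- and checking that \eqref{QuadraticDecomp} is legitimate, namely that $z$ is chosen rich enough that the product of the SOS multiplier $z_i\tau_i$ with the quadratic form $p_i$ can be rewritten as a quadratic form in $\begin{bmatrix}z^T & 1\end{bmatrix}^T$ with $\tilde{P}_i$ depending linearly on $\tau_i$, which is precisely what keeps \eqref{LMICond} an LMI. Note also that, in contrast to the separately bounded case, this route gives only sufficiency, since the multiplier $\tau$ and the passage to $\overline{\Sigma}_{\text{CB}}$ via the dualization lemma are exploited in a single direction.
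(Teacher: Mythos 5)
Your proof is correct and takes essentially the same route as the paper: reduce via Lemma~\ref{LemmaDual} to proving dissipativity for every $A\in\overline{\Sigma}_{\text{CB}}$, recognize that the outer factor in \eqref{LMICond} applied to $L=\begin{bmatrix}(Az)^T & z^T & 1\end{bmatrix}^T$ turns $L^T\Theta L$ into the dissipativity expression plus the $\tau$-weighted dual-set term and the $(z_i\tau_i)$-weighted constraint terms, then conclude by an S-procedure sign argument. The only cosmetic difference is that you deduce non-negativity of $L^T\Theta L$ pointwise from $\Theta\succeq0$, whereas the paper factors $\Theta=\Omega^T\Omega$ to exhibit $L^T\Theta L$ as an SOS polynomial and invokes Proposition~\ref{SOSRelaxation}; the two readings are interchangeable here.
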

\begin{proof}
	Since the ground-truth system~\eqref{TrueSystem} is an element of $\text{FSS}_{\TM{\text{CB}}}$ and $\overline{\text{FSS}}_{\TM{\text{CB}}}$ by Lemma~\ref{LemmaDual}, system \eqref{TrueSystem} is $(Q,S,R)$-dissipative on \eqref{Constraints} with quadratic constraints \eqref{Constraints1} if there exists a storage function $\lambda(x)=x^TPx, P\succeq0$ such that
	\begin{equation}\label{CondDissiIn}
	\begin{aligned}
	&x^TPx-z(x,u)^TA^TPAz(x,u)+\begin{bmatrix}x\\u\end{bmatrix}^T\begin{bmatrix}Q & S\\ S^T & R\end{bmatrix}\begin{bmatrix}x\\u\end{bmatrix}\TM{\geq}0,\\ 
	&\forall (x,u):\begin{bmatrix}z(x,u)\\1\end{bmatrix}^TP_i\begin{bmatrix}z(x,u)\\1\end{bmatrix}{\leq} 0, i=1,\dots,c,\forall A{\in}\overline{\Sigma}_{\TM{\text{CB}}}.
	\end{aligned}
	\end{equation}
	\TM{With $A{\in}\overline{\Sigma}_{\TM{\text{CB}}}$ implying that for all $(x,u)\in\mathbb{R}^n\times\mathbb{R}^m$
		\begin{equation*}
		z(x,u)^T\begin{bmatrix}I_{\TM{\ell}}\\A\end{bmatrix}^T\begin{bmatrix}\bar{\varDelta}_1 & \bar{\varDelta}_2\\ \bar{\varDelta}_2^T & \bar{\varDelta}_3\end{bmatrix}\begin{bmatrix}I_{\TM{\ell}}\\A\end{bmatrix}z(x,u)\leq0,
		\end{equation*}
		we apply Proposition~\ref{SOSRelaxation} to conclude that the conditioned dissipativity inequality \eqref{CondDissiIn} holds if there exist a $P\succeq0$, a non-negative constants $\tau$ and polynomials $z_i\tau_i\in\text{SOS}[x,u],i=1,\dots,c$ with \eqref{QuadraticDecomp} satisfying 
		\begin{equation}\label{Ineq2}
		L(x,u,A)^T\Theta L(x,u,A)\in\text{SOS}[x,u,\text{vec}(A)]
		\end{equation}
		with $L(x,u,A)=\begin{bmatrix}	z(x,u)^TA^T & z(x,u)^T &  1\end{bmatrix}^T$.
		%\begin{align*}
		%	*^T
		%	\Theta	
		%	\begin{bmatrix}\begin{array}{c}	Az(x,u)\\T_xz(x,u)\\\hline Tz(x,u)\\\hline z(x,u)\\Az(x,u)\\\hline z(x,u)\\1\end{array}\end{bmatrix}\leq0
		%\end{align*}
		%To attain a tractable LMI condition, we extract the matrix $A$ and the nonlinear proportion $z(x,u)$
		Finally, if \eqref{LMICond} is satisfied then there exists a matrix $\Omega$ with $\Theta=\Omega^T\Omega$ and thus \eqref{Ineq2} is an SOS polynomial by Definition~\ref{DefSOSMatrix}.} %Moreover, since $\tilde{P}_i$ of the quadratic decomposition \eqref{QuadraticDecomp} contains linearly the to-be-optimized coefficients of the SOS polynomial $\tau_i(x,u)$, \eqref{LMICond} is indeed an LMI.
\end{proof}

In Theorem~\ref{ThmDissiVeriNonCons}, dissipativity verification boils down to an LMI feasibility problem instead of a\TM{n} SOS problem as in Theorem~\ref{ThmDissiVeriGenral} because \TM{we extract in \eqref{Ineq2} all monomials in $x,u$, and $\text{vec}(A)$ into $L(x,u,A)$ similar to the square matricial representation in Proposition~\ref{SOSMatrix}.} %the implication of \eqref{Ineq2} by LMI \eqref{LMICond} corresponds to a\TM{n} SOS relaxation. 
\TM{However, this computational advantage comes at the cost of additional conservatism compared to Theorem~\ref{ThmDissiVeriGenral} as Theorem~\ref{ThmDissiVeriNonCons} considers quadratic storage functions and requires in its derivation simplified multipliers for Proposition~\ref{SOSRelaxation}, e.g., $\tau$ independent of $x$ and $u$.}  Note that we can generalize Theorem~\ref{ThmDissiVeriNonCons} for supply rates $s(x,u)=z(x,u)^TQz(x,u)$.%Moreover, note that we proceed similarly as for providing quadratic performance guarantees for linear fractional representations which are exploited in \cite{AnneDissi} to verify dissipativity properties for unknown linear systems.
\\\indent
In Theorem~\ref{ThmDissiVeriNonCons}, we consider SOS polynomials \TM{$z_i(x,u)\tau_i,i=1,\dots,c$} instead of non-negative constants as otherwise LMI \TM{\eqref{LMICond}} becomes indefinite if $P_i$ contains a negative right lower element which is mostly the case, e.g., $x^Tx\leq1$. Note that \TM{a linear mapping $\tilde{P}_i$ exists as the left hand side of the quadratic decomposition \eqref{QuadraticDecomp} is linear in $\tau_i$}. However, $\tilde{P}_i$ is not unique but is spanned by a linear subspace which provides additional degrees of freedom to deteriorate the conservatism of condition \eqref{LMICond}. \\\indent
We conclude this section by demonstrating the flexibility of this framework by employing prior system knowledge.% and appending additional nonlinearities and uncertainties.

\begin{rmk}\label{RmkExtensions}
	We can take prior knowledge of the system dynamics into account by considering 
	\begin{equation*}
	x(t+1)=Az_1(x(t),u(t))+\begin{bmatrix}\bar{A}_1 & \bar{A}_2\end{bmatrix}\begin{bmatrix}z_1(x(t),u(t))\\z_2(x(t),u(t))\end{bmatrix}
	\end{equation*}
	with unidentified matrix $A$ and known matrices $\bar{A}_1$ and $\bar{A}_2$. The additional vector of monomials $z_2(x,u)$ is beneficial if, for instance, $g(x)$ of a (polynomial) control-affine system $x(t+1)=f(x(t))+g(x(t))u(t)$ is known from some insight to the system. Moreover, $z_2(x,u)$ might be necessary for the quadratic decomposition \eqref{QuadraticDecomp}. \TM{Note that incorporating prior knowledge as described here is also conceivable for the framework of separately bounded noise.}%\\\indent
	%Inspired by the extraction of the nonlinearity $z(x,u)$ in inequality~\eqref{Ineq2}, a third extension might be the consideration of quadratically bounded non-polynomial nonlinearities $g(x,u)$
	%\begin{equation*}
	%	x(t+1)=A\begin{bmatrix}z(x(t),u(t))\\g(x(t),u(t))\end{bmatrix}
	%\end{equation*}	
	%where $z(x,u)$ still contains only monomials in $x$ and $u$. %Potentially, the additional nonlinearities could be bounded dynamically using integral quadratic constraints in discrete time. However, note that non-polynomial nonlinearities lead to the loss of some properties of the SOS relaxation as the asymptotically exactness mentioned in Section~\ref{SecSOS}. 
	%Note that this extension is also conceivable for the first framework for separately bounded noise.
\end{rmk}

\section{Comparison of both frameworks for separately bounded noise}\label{SecCompare}

Motivated by the frequently assumed separately bounded noise $||\tilde{d}_i||_2\leq \epsilon_i$ as non-probabilistic noise description, e.g., in system identification \cite{Milanese}, we compare in this section both previously proposed frameworks for data-driven dissipativity verification for this noise characterization. \\\indent
According to \cite{Groningen}, the cumulatively bounded noise description \eqref{noise} can incorporated this separately bounded noise by 
$\tilde{D}\tilde{D}^T\preceq \sum_{i=1}^{D}\epsilon_i^2I_n$. However, this characterisation also facilitates, e.g., noise with bounded energy $\sum_{i=1}^{D}\tilde{d}_i^T\tilde{d}_i\leq\sum_{i=1}^{D}\epsilon_i^2$ which includes more noise realizations than $||\tilde{d}_i||_2\leq \epsilon_i$. Hence, Assumption~\ref{Noise1} provides a more accurate description than Assumption~\ref{Noise2} \TM{for the separately bounded noise $||\tilde{d}_i||_2\leq \epsilon_i$}, and therefore leads to a tighter set-membership representation of the ground-truth system \eqref{TrueSystem}. For that reason, Theorem~\ref{ThmDissiVeriGenral} provides a less conservative condition for dissipativity verification than Theorem~\ref{ThmDissiVeriNonCons} which is indeed observed in Section~\ref{SecEx}.\\\indent
Furthermore, the feasible system set $\text{FSS}_{\TM{\text{SB}}}$ cannot increase by considering additional data samples. Contrary, we show in Subsection~\ref{SecEx1} that adding samples with high signal-to-noise-ratio \TM{to an original data set of $\text{FSS}_{\TM{\text{CB}}}$ might decrease its accuracy for dissipativity verification, and hence might render} LMI~\eqref{LMICond} infeasible. \TM{One explanation is that we cumulate all data samples equally weighted in \eqref{FSSA1} into one condition which corresponds to restrict all SOS polynomial multipliers $t_i(x,u,a),i=1,\dots,D$ in Theorem~\ref{ThmDissiVeriGenral} to be equal. Therefore, data samples with large noise increase the uncertainty of $\text{FSS}_{\TM{\text{SB}}}$. To circumvent this problem in Theorem~\ref{ThmDissiVeriNonCons}, we could consider  the intersection of $\text{FSS}_{\TM{\text{CB}}}$ for the original data set and for the data set with additional data by an S-procedure argument.}
%To some extend, this observation is important, e.g., if the data set includes outliers because for the cumulative noise description they influence negatively the whole data set if we don't neglect them.  
\\\indent 
Further advantages of Theorem~\ref{ThmDissiVeriGenral} are that its accuracy can be improved by parametrized storage functions as shown in Remark~\ref{RmkParaStorage} and general polynomial state-input constraints and supply rates can be handled.\\\indent
On the other hand, the framework of cumulatively bounded noise is computationally more attractive. The verification condition in Theorem~\ref{ThmDissiVeriNonCons} boils down to an LMI condition and its complexity doesn't increase with the amount of samples as all data samples \eqref{DataSet} are cumulated into one condition. Contrary, Theorem~\ref{ThmDissiVeriGenral} requires one additional SOS polynomial multiplier for each sample which might yield to a non-tractable optimization problem. This issue could be circumvented by the relaxation $t_1(x,u,a)=\dots=t_D(x,u,a)$ which then leads to a cumulative noise characterization. \\\indent
Furthermore, in our testing in Section~\ref{SecEx}, system description~\eqref{Systemdescribtion2} is computationally more efficient than \eqref{Systemdescribtion} when tackling systems \eqref{Systemdesc} with a large number of unidentified coefficients.\\\indent
To summarize this discussion, while the framework of separately bounded noise provides a data-efficient approach for the often used bounded noise $||\tilde{d}_i||_2\leq \epsilon_i$, the framework of cumulatively bounded noise is computationally more attractive. For that reason, the latter framework should always be considered if the noise is characterized by some cumulative property.

\section{Numerical Examples}\label{SecEx}

To measure the conservatism of both frameworks for separately bounded noise, we apply Corollary~\ref{DissiVeriCoro} and Theorem~\ref{ThmDissiVeriNonCons} on two systems to find a guaranteed upper bound on their $\ell_2$-gain $\gamma$ which corresponds to the supply rate $s(x,u)=\gamma^2 u^Tu-x^Tx$. To this end, the SOS problem of Corollary~\ref{DissiVeriCoro} and the LMI feasibility problem of Theorem~\ref{ThmDissiVeriNonCons} are extended by the minimization over $\gamma$.% and are solved in Matlab using YALMIP \cite{YALMIP} and the solver MOSEK. 

\subsection{Example 1}\label{SecEx1}

We determine an upper bound on the $\ell_2$-gain of the polynomial system
\begin{align*}
x(t+1)=-0.8x(t)+0.1x(t)^2+u(t)
\end{align*}
with state constraint $x^2-1\leq 0$ and input constraint $u^2-0.01\leq 0$. We receive the upper bound $\gamma\leq10.0$ by SOS optimization exploiting the system dynamics.\\\indent
To apply our data-driven methods, we draw samples \TM{\eqref{DataSet}} from a single trajectory with initial condition $x(0)=1$, input $u(t)=0.1,t\geq0$, and a random sampled and (separately) bounded noise $|\tilde{d}_i|\leq0.02$.% For Corollary~\ref{DissiVeriCoro}, we use a parametrized storage function, quadratic SOS polynomials $s_i(x,u),i=1,2$, and quartic SOS polynomials $t_i(x,u),i=1,\dots,D$. In Theorem~\ref{ThmDissiVeriNonCons}, quadratic SOS polynomials $\tau_i(x,u),i=1,2$ are considered. All optimization problems are solved in less than a second on a Lenovo i5 notebook.
\\\indent
Considering the first three \TM{noisy} data samples of the trajectory, we receive the upper bounds for the $\ell_2$-gain $\gamma_{\text{SB}}=16.3$ from Corollary~\ref{DissiVeriCoro} and $\gamma_{\text{CB}}=17.1$ from Theorem~\ref{ThmDissiVeriNonCons}. \\\indent
As stated in Section~\ref{SecCompare}, additional data don't increase $\gamma_{\TM{\text{SB}}}$ but potentially $\gamma_{\TM{\text{CB}}}$. Indeed, while the upper bound $\gamma_{\text{SB}}$ decreases to $13.3$ using the first $20$ samples, $\gamma_{\text{CB}}$ increases to $74.7$ using the first $6$ samples and LMI \eqref{LMICond} even becomes infeasible for more samples. This observation is due to the high signal-to-noise-ratio of the measured trajectory for $t\geq 5$. Note that all optimization problems in this example are solved in less than a second on a Lenovo i5 notebook.

\subsection{Example 2}\label{SecEx2}

The $\ell_2$-gain of the system
\begin{align*}
\begin{bmatrix}x_1(t+1)\\x_2(t+1)
\end{bmatrix}=\begin{bmatrix}
-0.5x_1+0.3x_2^2+0.2x_1x_2\\
0.4x_2+0.1x_2^2-0.2x_1^3+u
\end{bmatrix}(t)
\end{align*}
with $x_1^2\leq 1$, $x_2^2\leq 1$, and $u^2\leq 1$ is examined. Given the ground-truth system, we determine $2.1$ as an upper bound of the $\ell_2$-gain \TM{by SOS optimization}. The noise of the data \TM{\eqref{DataSet}} exhibits constant signal-to-noise-ratio $||\tilde{d}_i||_2\leq0.02||\tilde{x}_i||_2$. Furthermore, $x(0)=\begin{bmatrix}-1 & -1\end{bmatrix}^T$ and we apply the input signal $u(t)=0.7\sin(0.002t^2+0.1t)$ such that the system is excited over the whole time horizon. Moreover, we assume \TM{$z(x,u)=\begin{bmatrix}x_1&x_2& x_1^2&x_2^2&x_1x_2&x_1^3&u\end{bmatrix}^T$, i.e., the unidentified \TM{model} \eqref{Systemdesc} contains $14$ unknown coefficients and more monomials than is required to describe the ground-truth system}.\\\indent
Using the first \TM{$30$ noisy} samples of the input-state trajectory, we calculate the bounds $\gamma_{\text{SB}}=\TM{3.8}$ and $\gamma_{\text{CB}}=\TM{11.1}$. With \TM{$300$} data samples available, we can reduce the upper bound $\gamma_{\text{SB}}$ to $\TM{2.3}$ and $\gamma_{\text{CB}}$ to $\TM{3.6}$. \TM{The advantage of Theorem~\ref{ThmDissiVeriNonCons} is that computation time to solve its optimization problem is about two seconds while solving the SOS optimization problem of Corollary~\ref{DissiVeriCoro} takes now about $10\,\text{minutes}$.}

%While the LMI \eqref{LMICond} is infeasible when increasing the signal-to-noise-ration to $||d||_2\leq0.04||x||_2$, Theorem~\ref{ThmDissiVeriGenral} still provides an upper bound of $\gamma_{\text{SB}}=80.7$ for $D=20$.\\\indent
%Note that both frameworks determine meaningful bounds on the $\mathcal{L}_2$-gain with less data samples than \cite{MontenbruckLipschitz} at the cost of input-state measurements and a polynomial description of the system which requires more insight to the system. For example, 
%\cite{MontenbruckLipschitz} estimates the $\mathcal{L}_2$-gain of a similar complex system by approximately $10^4$ data samples.

\section{Conclusions}

We established two set-membership frameworks to check whether a polynomial system is dissipative without an explicitly \TM{identified} model but directly from noise-corrupted input-state measurements. The first framework provides a data-efficient but computationally expensive condition for separately bounded noise using standard SOS optimization. The second framework considers cumulatively bounded noise to deduce a more computationally attractive LMI condition with SOS multipliers, which corresponds partially to a generalization of \cite{AnneDissi} for polynomial systems. %In numerical examples, we showed that both frameworks are more data-efficient than using Lipschitz approximations \cite{MontenbruckLipschitz}, \cite{IterativeNLM} at the cost of input-state measurements and a polynomial description of the system which requires more insight to the system.
Subject of future research, we extend the results to find optimal dissipativity properties as conic relations~\cite{Zames} or nonlinearity measures \cite{MartinNLM}. Furthermore, the extension of the presented frameworks for input-output measurements might be interesting.

\end{document}